\theoremstyle{definition}
\theoremstyle{remark}
\newtheorem{example}{Example}
\theoremstyle{plain}
\theoremstyle{plain}
\newtheorem{lemma}{Lemma}
\theoremstyle{plain}
\newtheorem{result}{Result}
\theoremstyle{plain}
\newtheorem{definition}{Definition}
\theoremstyle{definition}
\newtheorem{proposition}{Proposition}
\theoremstyle{remark}
\newtheorem{assumption}{Assumption}
\begin{document}
\title{Common Identification and Common Learning}

\author{Martin W. Cripps}
\thanks{Department of Economics, University College London, m.cripps@ucl.ac.uk. My thanks are due to Johannes H\"{o}rner who posed this question and to Duarte Goncalves for his comments and advice. }

\date{June, 2024}

\begin{abstract}
\cite{CrippsElyMailathSamuelson08}  showed that if there are finitely many states, and the signals are i.i.d and finite, then individual learning is sufficient for common learning.
In this note we describe what is commonly learned when  this sufficient condition does not hold. 
\end{abstract}
\maketitle
\section{Introduction}

\cite{CrippsElyMailathSamuelson08} (henceforth CEMS) showed that if agents observe sequences
of iid private signals that ensure they learn an underlying state, then the state will also become approximate common knowledge.
Thus, individual learning is a sufficient condition for common learning.
Here we  provide a characterization of what is commonly learned in settings where individual learning does not occur.
That is, we consider settings where the agents' private signals are not rich enough to ensure that all states are learned. 
As each agent accumulates signals they are only able to learn a partition of the state space: their ``identification partition''. 
Proposition \ref{clll}  shows that the join of the agents' identification partitions determines what is commonly learned.

A motivation for studying common learning is to understand agents' ability to coordinate when they privately learn about the appropriate action. For example, \cite{FrickII23} study how rates of common learning affect equilibrium outcomes and welfare in coordination games. 
However, a maintained assumption in the literature on common learning is that it is \emph{possible} for all agents to learn all of the states.
This seems unlikely to be true in many important macro-based coordination games.
It would, therefore, be useful to have a result that details what is commonly learned when not all agents
can learn all states.
This is the purpose of this note.

In Section 2 we describe the information structures we study and what restrictions we place on them. 
We also define the notion of common identification and give an illustrative example. 
The section ends with a restatement of the definition of common learning for these more general information structures. 
Section 3 contains the result characterising what is commonly learned. The section begins with a description of the events that will be used to establish common learning. Then some well-known results on individual learning are given in Lemma \ref{fr}.
After this, the example is revisited to explain our result on common learning and finally the result is proven.

\section{The Model and Definitions}
There are $L$ agents named $\ell\in\{1,2,\dots,L\}:=[L]$ who each privately observe a sequence of state-dependent signals.
Time is discrete and the time periods are denoted $t=0,1,2,\dots$ \ .
Before period $t=0$, nature selects a state $\theta$ from a finite set $\Theta$ according to a full-support common prior  $p\in\Delta(\Theta)$.
Then, in the periods $t=0,1,2,\dots$, each agent privately observes a signal.
Agent $\ell$ has a finite set of signals $x_\ell\in X_\ell$ and a signal profile is denoted $x=(x_1,\dots,x_L)\in X:=X_1\times\dots\times X_L$.
Conditional on $\theta$, the signals  are generated by independent sampling from the joint distribution
$\pi^\theta\equiv (\pi^\theta(x))_{x\in X} \in\Delta(X)$.
We will denote the marginal over agent $\ell$\/'s signals in state $\theta$ as $\phi^\theta_\ell \equiv(\phi^\theta_\ell(x_\ell))_{x_\ell\in X_\ell}$, 
that is,
$$
\phi_\ell^\theta(x_\ell):=\sum_{x_{-\ell}\in X_{-\ell}}\pi^\theta(x_\ell,x_{-\ell}),
\qquad
\phi_\ell^\theta\in\Delta(X_\ell).
$$
It may not be possible for  agent $\ell$ to learn a particular state from repeated observation of their signals. 
We do not assume that the states generate marginal distributions $\{ \phi_\ell^\theta : \theta\in\Theta\}$ for agent $\ell$ that are distinct.
To describe what the agents are able to learn, we partition
$\Theta$ into the sets of states that agent $\ell$  can identify.

\begin{definition}
For each agent $\ell\in[L]$, their \emph{identification partition}, $\mathcal{Q}_\ell=(Q_\ell^1,\dots,Q_\ell^{n_\ell})$,   is a partition of $\Theta$, such that   $\theta,\theta'\in Q_\ell\in\mathcal{Q}_\ell$ if and only if $\phi_\ell^\theta=\phi_\ell^{\theta'}$.
\end{definition}

 \begin{example}[An Information Structure with Unidentified States]
 \label{eg1}
Consider the following information structure with two agents $\ell\in\{1,2\}$.
There are four states $\Theta=\{\theta_1,\theta_2,\theta_3,\theta_4\}$ and in these states each agent repeatedly observes one of two signals $x_1,x_2\in\{0,1\}$.
The joint distributions of the signals in these four states, $(\pi^{\theta_k})_{k=1}^4$, are given in the tables below.
\begin{figure}[ht]
\begin{center}
\begin{game}{2}{2}
  $\theta_1$& $x_2=0$ & $x_2=1$ \\
$x_1=0$   & $\frac{3}{8}$ &$\frac{1}{8}$  \\
 $x_1=1$   & $\frac{1}{8}$ & $\frac{3}{8}$ 
 \end{game}
 \qquad
 \begin{game}{2}{2}
  $\theta_2$& $x_2=0$ & $x_2=1$ \\
$x_1=0$   & $\frac{5}{12}$ &$\frac{1}{12}$  \\
 $x_1=1$   & $\frac{1}{4}$ & $\frac{1}{4}$ 
 \end{game}
 \\
 \begin{game}{2}{2}
  $\theta_3$& $x_2=0$ & $x_2=1$ \\
$x_1=0$   & $\frac{5}{12}$ &$\frac{1}{4}$  \\
 $x_1=1$   & $\frac{1}{12}$ & $\frac{1}{4}$ 
 \end{game}
 \qquad
 \begin{game}{2}{2}
  $\theta_4$& $x_2=0$ & $x_2=1$ \\
$x_1=0$   & $\frac{2}{5}$ &$\frac{1}{5}$  \\
 $x_1=1$   & $\frac{1}{5}$ & $\frac{1}{5}$ 
 \end{game}
\end{center}
\caption{Four Signal Distributions}
\label{f1}
\end{figure}

\noindent
In these four distributions the marginal over agent 1's signals, $x_1$, in the states $\theta_1$ and $\theta_2$  is $(\frac{1}{2},\frac{1}{2})$,
but in all other states agent 1's marginals are distinct.
By observing enough of her signals, agent 1 can learn the partition $\mathcal{Q}_1=\{ \{\theta_1,\theta_2\},\{\theta_3\},\{\theta_4\}\}$.
This is agent 1's identification partition in this information structure.
Her (conditional) posterior over the set $\{\theta_1,\theta_2\}$ will never vary from her prior.
For agent 2 the signal distributions in the states $\theta_1$ and $\theta_3$ have the marginal $(\frac{1}{2},\frac{1}{2})$, while all other states have distinct marginals.
Thus agent 2 has   the identification partition $\mathcal{Q}_2=\{ \{\theta_1,\theta_3\},\{\theta_2\},\{\theta_4\}\}$.
\end{example}

We will assume (Assumption \ref{fss}) that for every $\theta$ the marginal distribution of any pair of agents' signals has full support.
This assumption is not made in CEMS.
Its purpose here is make the proof of common learning robust to uncertainty in the state that agents cannot detect.
Here agents will not know the state but  only a subset of states.
Thus their inferences about other agents' signals need to be robust to this lack of certainty about the state.
This full-support assumption is a simple way to ensure that this lack of certainty does not prevent common learning of a coarser class of events.
(We will use $x_{-\ell \ell'}$ to denote the signal profile $x$ with the $\ell^{\rm th}$ and $\ell'^{\rm th}$ entry deleted.)

\begin{assumption}
\label{fss}
For all $\theta\in\Theta$, all $\ell,\ell'\in[L]$ and all $(x_\ell,x_{\ell'})\in X_\ell\times X_{\ell'}$, 
$$
\sum_{x_{-\ell \ell'}\in X_{-\ell \ell'}} \pi^\theta(x_\ell,x_{\ell'},x_{-\ell \ell'})>0.
$$
\end{assumption}
This assumption excludes the possibility that  agents' signals are perfectly correlated.
It is slightly more general than the 
 ``fully private'' case by \cite{FrickII23} which requires $\pi^\theta$ to have a full support.

The model of states and signal sequences above  defines a probability space on
the set of states of the world $\Omega:=\Theta\times X^\infty$. The prior $p$, the signal distributions $(\pi^\theta)_{\theta\in\Theta}$ and the i.i.d. assumption  determine a probability measure $\mathbbm{P}$ on $\Omega$.
Associated with $\mathbbm{P}$ is an expectation operator $E(.)$.
We also use $\mathbbm{P}^\theta$ and
$E^\theta(.)$ to denote their $\theta$-conditional versions.
It will often be convenient to abbreviate the event $\{\theta\}\times X^\infty\subset\Omega$ to 
$\{\theta\}$ or just $\theta$.


\subsection{Common Identification and Common Learning}

This section provides definitions of common learning and common identification. 
We begin by defining common identification and then explain how it works in the above example.
Then some additional notation is defined and the section ends with an extension of CEMS's  definition of common learning to sets of states.

\begin{definition}[Common Identification]
\label{cii}
The sets of states that are \emph{commonly identified}  are given by the partition $\mathcal{QI}:=\bigvee_{\ell\in L} \mathcal{Q}_\ell$,
where $\mathcal{Q}_\ell$, $\ell\in[L]$, are the agents' identification partitions.%
\footnote{The join of two partitions $A\vee B$ is the finest partition that has components that are unions of elements of $A$ and of $B$.}
\end{definition}

\setcounter{example}{0}
 \begin{example}[Continued]
In the information structure of Figure  \ref{f1}, agent 1 can identify the partition $\mathcal{Q}_1=\{ \{\theta_1,\theta_2\},\{\theta_3\},\{\theta_4\}\}$ and agent 2 can identify the partition $\mathcal{Q}_2=\{ \{\theta_1,\theta_3\},\{\theta_2\},\{\theta_4\}\}$.
The join of the two partitions is $\mathcal{QI}:=\mathcal{Q}_1 \vee \mathcal{Q}_2=\{ \{\theta_1,\theta_2,\theta_3\},\{\theta_4\}\}$.
The partition $\mathcal{QI}$ contains the events that the agents can \emph{commonly identify} in this setting.
In Proposition \ref{clll} we show that the events in $\mathcal{QI}$ are  commonly learned.
\end{example}

It is now necessary to provide the notation to describe the agents'  learning.
 At the start of period $t$ there is a past history $h_{\ell t}$ of signals for player $\ell$, where $h_{\ell t}=(x_{\ell0},\dots,x_{\ell t-1})\in X_\ell^t$.
The filtration induced by agent $\ell$\/'s histories is denoted $\mathcal{H}_\ell:=(\mathcal{H}_{\ell t})_{t=0}^\infty$.
Agent $\ell$\/'s posterior beliefs about the set $Q\subset\Theta$ of states  is written in the usual way 
$$
E(\mathbbm{1}_{\{\theta\in Q \}} \mid \mathcal{H}_{\ell t})(\omega)\equiv \mathbbm{P}(Q \mid h_{\ell t}(\omega))
\equiv \mathbbm{P}(Q\mid h_{\ell t}).
$$
For any event $F\in\Omega$ we need to define some other events that describe the agents' knowledge about $F$. First  $B^q_{\ell t}(F)$ is the event
 that agent $\ell$ attaches at least probability $q$ to the event $F$ at time $t$.
 Second, $B^q_{t}(F)$  is the event that all agents attach at least probability $q$ to $F$ at time $t$:
$$
B^q_{\ell t}(F):=\{\omega\in\Omega:E(\mathbbm{1}_F\mid \mathcal{H}_{\ell t})\geq q\},
\qquad
B^q_{t}(F):=\bigcap_{\ell\in[L]} B^q_{\ell t}(F).
$$
It is well known that if an individual gets enough signals, then they will learn those states that they can identify. 
That is, individuals' beliefs  eventually will attach high probability to only those states
that could have generated the signals they observe. More formally, for all $\ell\in[L]$, any $q<1$, and any $Q_\ell\in\mathcal{Q}_\ell$
\begin{equation}
\lim_{t\rightarrow\infty} \mathbbm{P}^\theta\left( B^q_{\ell t}( Q_\ell )\right)=1, \qquad \forall \theta\in Q_\ell.
\label{ll}
\end{equation}
A proof of this claim follows from Lemma \ref{fr}  and, for completeness, is given in the Appendix.

We now define common learning of sets of states. After this we report an important sufficient condition for common learning.
The event $F$ is  said to be \emph{common $q$-belief at $t$} on the set of states
\begin{equation}
C^q_{t}(F):=\bigcap_{n\geq1} [B^q_{t}]^n(F).
\label{cl}
\end{equation}
Common learning of a set of states $Q\subset\Theta$ requires that for any $\theta\in Q$ the event $Q$ is eventually common-$q$-belief for any $q$.
That is, for all $q\in(0,1)$  and all $\theta\in Q$
\begin{equation}
\lim_{t\rightarrow\infty}\mathbbm{P}^\theta\left(  C^q_{t}\left(Q\right)\right)=1.
\label{fg}
\end{equation}
The characterization of common $q$-belief of \cite{MondererSamet89} below can be used to write an equivalent but more useful definition of common learning.
The event $F$ is said to be \emph{$q$-evident at $t$} if $F\subset B^q_{t}(F)$.
That is, if the event $F$ occurred all then agents attach at least probability $q$ to it. 
The result below shows that verifying $q$-evidence of an event is a way of establishing common $q$-belief.

\begin{result}[Monderer and Samet]
$F'$ is common $q$-belief at $\omega\in\Omega$ at time $t$ if and only if there exists an event $F$ such that $F$ is $q$-evident at time $t$ and
$\omega\in F\subset B^q_t(F')$.
\end{result}

As common $q$-belief (for any $q$) is necessary and sufficient for common learning,
 CEMS show that the previous definition for common learning (\ref{cl}) can be rewritten in terms of $q$-evidence.

\begin{definition}[Common Learning]
Individuals commonly learn the event $Q\subset\Theta$,
 if for all $q\in(0,1)$ there exists a  time $T$ 
 and a sequence of events $(F_{t})_{t=0}^\infty$ such that
 for all $t\geq T$ and for all $\theta\in Q$: (1)
$F_t\subset B^q_t(Q)$; (2)
$\mathbbm{P}^\theta\left(  F_t \right)>q$;
(3)
$F_t$ is $q$-evident at $t$.
\label{defcl}
\end{definition}

Notice that if $Q$ is a set with only one element this is equivalent to the previous definitions of common learning.


\section{Common Learning}

In this section we will show that the sets in the partition of commonly identified sets, $\mathcal{QI}$, are commonly learned.
We being by describing some features of individual learning and then move on to proving the main result.

\subsection{Individual Learning}
Here we define  events on signals that will later be used to define $q$-evident events.
These are different from the events used in previous proofs of common learning.
The $q$-evident sets of signals are defined in a way that depends on $Q_\ell$ but they are independent of the particular  $\theta\in Q_\ell$.
This is because  an agent does not know which state in $Q_\ell$ is actually generating their signals.
It  contrasts with CEMS and \cite{FrickII23}, where the $q$-evident sets are sometimes chosen in a way that depends on the details of the joint 
distributions of signals  $\pi^\theta$.
In those constructions, signal profiles that have zero probability affect the fine details of the $q$-evident event for each state.
Assumption \ref{fss} allows us avoid this issue.
Also, the total variation norm, $\Vert.\Vert_{TV}$,
 will be used.%
 \footnote{ $\Vert\phi_\ell\Vert_{TV}:=\frac{1}{2}\sum_{x_\ell}|\phi_\ell(x_\ell)|$.}
 This norm permits the use of the Dobrushin Ergodicity Coefficient, which has useful contraction properties when applied to Markov-like operators.
CEMS use the $\sup$ norm and \cite{FrickII23} who use the relative entropy distance to build a contraction.

The $q$-evident events necessary to establish common learning are constructed so that all agents observe signals with empirical measures close to their marginals in a state.
For each agent $\ell$ and each signal $x_\ell\in X_\ell$, let $\hat x_{\ell}(h_{\ell t})\in \mathbbm{Z}_+$ count the number of times the signal $x_\ell$ occurs in the history $h_{\ell t}$. 
Then, $\hat\phi_{\ell t}\in\Delta(X_\ell)$ is the period-$t$ empirical measure of agent $\ell$\/'s signals,
where $\hat\phi_{\ell t}(x_\ell):=t^{-1}\hat x_{\ell}(h_{\ell t})$.
If all agents' empirical measures at time $t$ are within $\varepsilon$ of
their marginal distribution of signals in  state $\theta$, then $I_{t\varepsilon}(\theta)$ has occurred.
That is,
$$
I_{t\varepsilon}(\theta):=\{\omega: \Vert\hat\phi_{\ell t}-\phi^\theta_{\ell}\Vert_{TV}\leq \varepsilon, \forall \ell\in[L]\}.
$$
The earlier proofs of common learning showed that neighborhoods like $I_{t\varepsilon}(\theta)$ are common 
$q$-belief.
In our setting it is  possible that no agent can  identify the state $\theta$ or this neighborhood,
so we  consider  unions of these neighborhoods over the sets of states.
The event $I_{t\varepsilon}(Q)$ is defined so that there is some $\theta\in Q$ such that all agents have observed signals close to the marginals in the state $\theta$.

\begin{definition}
For any $\varepsilon>0$ and $Q\subset\Theta$ let $I_{t\varepsilon}(Q):=\cup_{\theta\in Q}I_{t\varepsilon}(\theta)$.
\end{definition}

The following lemma now records some well-know properties of the events $I_{t\varepsilon}(Q)$ when $Q$ is chosen to lie in the partition of commonly identified sets:  $(I_{t\varepsilon}(Q))_{Q\in\mathcal{QI}}$ .
It shows (in parts (1) and (2)) that if $\theta\in Q\in\mathcal{QI}$, then the agents' empirical distributions are in the sets $I_{t\varepsilon}(\theta)$ and $I_{t\varepsilon}(Q)$ with arbitrarily high probability for $t$ sufficiently large.
This follows from the fact that signals are iid and the strong laws of large numbers.
It also shows (part (3)) that if  $Q_\ell\in\mathcal{Q}^\ell$ is in agent $\ell$\/'s identification partition and $Q_\ell\ni\theta$, 
then a lower bound on the agent's posterior  on $Q_\ell$ converges to one on the event $I_{t\varepsilon}(\theta)$ as $t$ grows.
This uses the fact that on $I_{t\varepsilon}(\theta)$ the log likelihoods grow linearly in $t$.
Hence (in part 4) $I_{t\varepsilon}(\theta)$ is a subset of the event where $Q_\ell$ is $q$-believed by agent $\ell$ for a suitably chosen $q$.
The proof of the Lemma is given in the appendix.
It uses  standard results on learning that can be found many places in the literature.

\begin{lemma}
\label{fr}
There exists $\bar\varepsilon>0$ such that for all $\varepsilon\in(0,\bar\varepsilon)$ and all $Q\in\mathcal{QI}$
\begin{enumerate}
\item
For all $\theta$, $\mathbbm{P}^\theta(I_{t\varepsilon}(\theta))\rightarrow1$ as $t\rightarrow\infty$.
\item
For all $\theta\in Q$, $\mathbbm{P}^\theta(I_{t\varepsilon}(Q))\rightarrow1$ as $t\rightarrow\infty$.
\item
There exists $b>0$ such that for any $\theta\in \bigcap_{\ell=1}^L Q_\ell\subset Q$, $\ell\in[L]$, and  $\omega\in I_{t\varepsilon}(\theta)$,
 $\mathbbm{P}(Q_\ell|h_{\ell t}(\omega)) \geq 1-\frac{e^{-tb}}{p^\theta}$.
\item
There exists $b>0$ such that for any $\theta\in Q_\ell\in\mathcal{Q}_\ell$, any $\ell$, and any $t$,
$$
I_{t\varepsilon}(\theta)\subset B_{\ell t}^q(Q_\ell),
\qquad
\makebox{if} 
\qquad q=1-\frac{e^{-tb}}{p^\theta}.
$$
\end{enumerate}
\end{lemma}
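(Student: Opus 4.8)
The plan is to handle the four parts in order, obtaining (2) and (4) quickly from (1) and (3), and concentrating the work on the likelihood-ratio estimate (3). For part (1), fix $\theta$ and argue under $\mathbbm{P}^\theta$. Conditional on $\theta$ each agent $\ell$ observes an i.i.d. sequence with marginal $\phi_\ell^\theta$, so the strong law of large numbers gives $\hat\phi_{\ell t}\to\phi_\ell^\theta$ almost surely and hence $\Vert\hat\phi_{\ell t}-\phi_\ell^\theta\Vert_{TV}\to0$ almost surely. Because $[L]$ and each $X_\ell$ are finite, intersecting these finitely many almost-sure events shows $\mathbbm{P}^\theta(I_{t\varepsilon}(\theta))\to1$ for every $\varepsilon>0$. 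Part (2) is then immediate: for $\theta\in Q$ one has $I_{t\varepsilon}(\theta)\subseteq I_{t\varepsilon}(Q)$, so $\mathbbm{P}^\theta(I_{t\varepsilon}(Q))\geq\mathbbm{P}^\theta(I_{t\varepsilon}(\theta))\to1$.

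For part (3), I would first note that Assumption \ref{fss} forces every individual marginal to have full support: summing the pairwise inequality over the partner's signal gives $\phi_\ell^\theta(x_\ell)>0$ for all $\ell,\theta,x_\ell$. Hence every log-likelihood ratio is finite, and I set $M:=\max_{\ell,\theta,\theta',x_\ell}\bigl|\log(\phi_\ell^\theta(x_\ell)/\phi_\ell^{\theta'}(x_\ell))\bigr|<\infty$. Fix $\theta$ and, for each agent $\ell$, let $Q_\ell$ be the cell of $\mathcal{Q}_\ell$ containing $\theta$; the estimate is carried out for each $\ell$ separately. Writing the agent-$\ell$ likelihood of a history as $\mathbbm{P}^{\theta'}(h_{\ell t})=\prod_{x_\ell}\phi_\ell^{\theta'}(x_\ell)^{\hat x_\ell(h_{\ell t})}$, the log-likelihood ratio between $\theta$ and any $\theta''\notin Q_\ell$ equals $t\sum_{x_\ell}\hat\phi_{\ell t}(x_\ell)\log(\phi_\ell^\theta(x_\ell)/\phi_\ell^{\theta''}(x_\ell))$. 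On $I_{t\varepsilon}(\theta)$, replacing $\hat\phi_{\ell t}$ by $\phi_\ell^\theta$ inside the sum changes it by at most $2M\varepsilon$ in absolute value (by the bound $|\sum_{x_\ell}a_{x_\ell}b_{x_\ell}|\leq(\max_{x_\ell}|b_{x_\ell}|)\sum_{x_\ell}|a_{x_\ell}|$ and the definition $\Vert\cdot\Vert_{TV}=\tfrac12\sum|\cdot|$), so the per-period ratio is at least $D(\phi_\ell^\theta\Vert\phi_\ell^{\theta''})-2M\varepsilon$, where the relative entropy $D(\phi_\ell^\theta\Vert\phi_\ell^{\theta''}):=\sum_{x_\ell}\phi_\ell^\theta(x_\ell)\log(\phi_\ell^\theta(x_\ell)/\phi_\ell^{\theta''}(x_\ell))$ is strictly positive because $\theta''\notin Q_\ell$ forces $\phi_\ell^{\theta''}\neq\phi_\ell^\theta$ by the definition of the identification partition.

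I would then let $d>0$ be the minimum of these finitely many strictly positive relative entropies, choose $\bar\varepsilon$ so small that $2M\bar\varepsilon<d/2$, and set $b:=d/2$; for $\varepsilon\in(0,\bar\varepsilon)$ this gives, on $I_{t\varepsilon}(\theta)$, the bound $\mathbbm{P}^{\theta''}(h_{\ell t})\leq e^{-tb}\,\mathbbm{P}^\theta(h_{\ell t})$ for every $\theta''\notin Q_\ell$. Bounding the denominator of the posterior $\mathbbm{P}(Q_\ell\mid h_{\ell t})=\sum_{\theta'\in Q_\ell}p^{\theta'}\mathbbm{P}^{\theta'}(h_{\ell t})/\sum_{\theta'}p^{\theta'}\mathbbm{P}^{\theta'}(h_{\ell t})$ below by the single term $p^\theta\mathbbm{P}^\theta(h_{\ell t})>0$ and the numerator of its complement above by the previous display then yields $1-\mathbbm{P}(Q_\ell\mid h_{\ell t})\leq e^{-tb}\bigl(\sum_{\theta''\notin Q_\ell}p^{\theta''}\bigr)/p^\theta\leq e^{-tb}/p^\theta$, which is (3). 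Part (4) is just (3) read against the definition of $B^q_{\ell t}$: on $I_{t\varepsilon}(\theta)$ the posterior $\mathbbm{P}(Q_\ell\mid h_{\ell t})\geq 1-e^{-tb}/p^\theta=q$, so $I_{t\varepsilon}(\theta)\subseteq B^q_{\ell t}(Q_\ell)$. The main obstacle is the \emph{uniformity} in (3): one must produce a single threshold $\bar\varepsilon$ and a single exponential rate $b$ valid simultaneously for all agents and all states, which is exactly what the finiteness of $\Theta,[L],(X_\ell)$ and the strict positivity of relative entropy off the identification cells are there to supply.
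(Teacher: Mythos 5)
Your proof is correct and takes essentially the same route as the paper's: the same decomposition ((2) immediate from (1), (4) a restatement of (3)), and the same core argument for (3), since your uniform lower bound $D(\phi_\ell^\theta\Vert\phi_\ell^{\theta''})-2M\varepsilon\geq b$ on the per-period log-likelihood ratio is exactly the paper's key inequality (\ref{chen}) (note $H(\hat\phi_{\ell t}\Vert\phi_\ell^{\theta})-H(\hat\phi_{\ell t}\Vert\phi_\ell^{\theta''})=-\sum_{x_\ell}\hat\phi_{\ell t}(x_\ell)\log\bigl(\phi_\ell^\theta(x_\ell)/\phi_\ell^{\theta''}(x_\ell)\bigr)$), followed by the identical summation of the exponential bounds over $\theta''\notin Q_\ell$ to get $\mathbbm{P}(Q_\ell\mid h_{\ell t})\geq 1-e^{-tb}/p^\theta$. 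The only immaterial differences are that you obtain the uniform rate $b$ by an explicit Lipschitz estimate where the paper uses continuity of the Kullback--Leibler divergence plus finiteness, and you prove part (1) via the strong law of large numbers where the paper invokes Sanov's theorem, whose exponential rate the Lemma itself does not need (the paper reuses it only in the separate proof of (\ref{ll})).
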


\setcounter{example}{0}
\begin{example}[Continued]
There are two commonly identified sets of states: $\{\theta_1,\theta_2,\theta_3\}$ and $\{\theta_4\}$.
We will, therefore, want to consider the events $I_{t\varepsilon}(\{\theta_1,\theta_2,\theta_3\})$ and $I_{t\varepsilon}(\{\theta_4\})$.
The last of these, $I_{t\varepsilon}(\{\theta_4\})$, occurs if the  empirical measure of agent 1's signals and the empirical measure of agent 2's signals are within $\varepsilon$ of $(3/5,2/5)$. 
(This is the blue region in Figure \ref{f2}.)
The lemma says that 
if $\theta_4$ is the true state then this will happen with high probability for $t$ large.
Furthermore, on this event
both agents will  attach high probability the $\theta_4$.
The event $I_{t\varepsilon}(\{\theta_1,\theta_2,\theta_3\})$ occurs if:  the empirical measure of
agent 1's signals is within $\varepsilon$ of $(1/2,1/2)$ and the empirical measure of agent 2's signals 
is  within  $\varepsilon$ of either $(1/2,1/2)$ or $(2/3,1/3)$;
or if  the empirical measure of
agent 1's signals is within $\varepsilon$ of $(2/3,1/3)$ and the empirical measure of agent 2's signals 
is  within  $\varepsilon$ of $(1/2,1/2)$. (In pink in the figure.)
These are illustrated in the figure below where the axes describe the probabilities of the zero signal.
When $I_{t\varepsilon}(\{\theta_1,\theta_2,\theta_3\})$ occurs and agent 1 observes frequencies close to $\frac{1}{2}$ they do not know whether agent 2 has seen frequencies close to $\frac{1}{2}$ or $\frac{2}{3}$.
Similarly, if agent 2 observes frequencies close to $\frac{1}{2}$ they do not know whether agent 1 has seen frequencies close to $\frac{1}{2}$ or $\frac{2}{3}$.

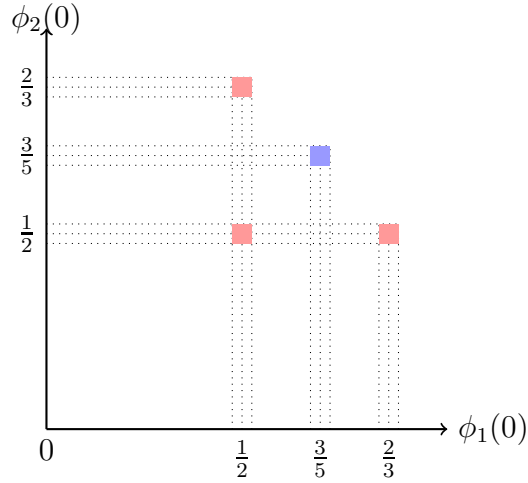
\begin{figure}
\begin{center}
\begin{tikzpicture}[scale=1.3]
\draw [thick,->] (0,0) -- (4.1,0);
\node [right] at (4.1,0) {$\phi_1(0)$};
\node [below] at (0,0) {$0$};
\node at (0,4.2) {$\phi_2(0)$};
\draw [thick,->] (0,0) -- (0,4.1);
\node [below] at (3.5,0) {$\frac{2}{3}$};
\node [below] at (2.8,0) {$\frac{3}{5}$};
\node [left] at (0,3.5) {$\frac{2}{3}$};
\node [left] at (0,2.8) {$\frac{3}{5}$};
\node [below] at (2,0) {$\frac{1}{2}$};
\node [left] at (0,2) {$\frac{1}{2}$};
\draw [dotted] (3.5,0) -- (3.5,2);
\draw [dotted] (3.6,0) -- (3.6,2.1);
\draw [dotted] (3.4,0) -- (3.4,2.1);
\draw [dotted] (2.7,0) -- (2.7,2.9);
\draw [dotted] (2.8,0) -- (2.8,2.8);
\draw [dotted] (2.9,0) -- (2.9,2.9);
\draw [dotted] (2.1,0) -- (2.1,3.6);
\draw [dotted] (2,0) -- (2,3.5);
\draw [dotted] (1.9,0) -- (1.9,3.6);
\draw [dotted] (0,3.6) -- (2.1,3.6);
\draw [dotted] (0,3.5) -- (2,3.5);
\draw [dotted] (0,3.4) -- (2.1,3.4);
\draw [dotted] (0,2.1) -- (3.6,2.1);
\draw [dotted] (0,2) -- (3.5,2);
\draw [dotted] (0,1.9) -- (3.6,1.9);
\draw [dotted] (0,2.8) -- (2.8,2.8);
\draw [dotted] (0,2.7) -- (2.9,2.7);
\draw [dotted] (0,2.9) -- (2.9,2.9);
\fill[blue!40!white] (2.7,2.7) rectangle (2.9,2.9);
\fill[red!40!white] (1.9,1.9) rectangle (2.1,2.1);
\fill[red!40!white] (1.9,3.4) rectangle (2.1,3.6);
\fill[red!40!white] (3.4,1.9) rectangle (3.6,2.1);
\end{tikzpicture}
\end{center}
\caption{$I_{t\varepsilon}(\theta_4)$ in Blue and $I_{t\varepsilon}(\theta_1,\theta_2,\theta_2)$ in Pink}
\label{f2}
\end{figure}

\end{example}


\subsection{Common Learning}

What follows is the main/only result of this paper. It shows that the sets in the partition $\mathcal{QI}$ are commonly learned.
The proof simplifies the original construction of CEMS which allows us to deal with the case that there are several states that agents cannot detect.
The way to proof works is first explained in the example below.

\setcounter{example}{0}
\begin{example}[Continued]
Consider the event $I_{t\varepsilon}(\{\theta_1,\theta_2,\theta_3\})$ illustrated in Figure \ref{f2}.
The proof shows that this is common $q$-belief for $t$ sufficiently large.
If this event has occurred, then agent 1 will either have seen signals that are close to  $(\frac{1}{2},\frac{1}{2})$  or  close $(\frac{2}{3},\frac{1}{3})$.
Suppose that at time $t$ agent 1 has observed signals with the empirical probability $(\frac{1}{2}+\nu,\frac{1}{2}-\nu)$.
That is, agent 1 has observed signals distance $\nu$ from the true distribution of signals in states $\theta_1$ and $\theta_2$. 
If $\nu$ is small and $t$ is large, agent 1 will only attach high probability to these two states.
Agent 1's expected frequencies agent 2's signals will depend on which state ($\theta_1$ or $\theta_2$) occurred.
Conditional on $\theta_1$ she will do the calculation below to form her expectation of the empirical measure of agent 2's signals:
$$
({\textstyle \frac{1}{2}}+\nu,{\textstyle \frac{1}{2}}-\nu)\left[
\begin{array}{cc} 
\frac{3}{4} & \frac{1}{4}\\
\frac{1}{4} & \frac{3}{4}
\end{array}\right]
=
({\textstyle \frac{1+\nu}{2}},{\textstyle \frac{1-\nu}{2}}).
$$
Conditional on $\theta_2$ she will do the different calculation:
$$
({\textstyle \frac{1}{2}}+\nu,{\textstyle \frac{1}{2}}-\nu)\left[
\begin{array}{cc} 
\frac{5}{6} & \frac{1}{6}\\
\frac{1}{2} & \frac{1}{2}
\end{array}\right]
=
({\textstyle \frac{2+\nu}{3}},{\textstyle \frac{1-\nu}{3}}),
$$
Whichever of the states $\theta_1$ or $\theta_2$ occurred, her expectations of agent 2's empirical measure are even closer (in the variation norm) to agent 2's marginals than agent 1's signals are.%
\footnote{For this to be true in general requires Assumption \ref{fss}.}
If agent 1 observes signals close to the marginal in the partition element $\{\theta_1,\theta_2\}$, they do not know which of these two states has occurred, but if it was $\theta_1$ (respectively $\theta_2$) then agent 2's signals are in $I_{t\varepsilon}(\{\theta_1\})$ (respectively $I_{t\varepsilon}(\{\theta_2\})$) with high probability.
As $I_{t\varepsilon}(\{\theta_1,\theta_2\})\subset I_{t\varepsilon}(\{\theta_1,\theta_2,\theta_3\})$ they also attach high probability to this event.
Repeating this procedure for all agents and all elements of the identification partitions that lie in $\{\theta_1,\theta_2,\theta_3\}$ we can show that this event
is $q$-believed for $t$ large.

This  illustrates the principal difference between this result and ones where all states are identified.  It is necessary that expectations are closer in the variation norm for \emph{all} of the states in the partition element $\{\theta_1,\theta_2\}$, thus the fine details of the joint distributions of the signals in one of these states cannot be used to define $q$-evident events. 
\end{example}

\begin{proposition}
The event $Q\subset\Theta$ is commonly learned if and only if $Q\in \mathcal{QI}$.
\label{clll}
\end{proposition}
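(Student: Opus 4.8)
The plan is to prove the two directions separately: sufficiency (every $Q\in\mathcal{QI}$ is commonly learned) by an explicit $q$-evident construction, and necessity (a commonly learned $Q$ must be a union of cells of $\mathcal{QI}$) from individual learning. Throughout I verify the reformulation in Definition \ref{defcl}. For sufficiency the candidate witness is $F_t:=I_{t\varepsilon}(Q)$ for a fixed small $\varepsilon\in(0,\bar\varepsilon)$, checked for $t$ large. The structural fact I keep using is that, because $\mathcal{QI}=\bigvee_\ell\mathcal{Q}_\ell$, every $Q\in\mathcal{QI}$ is $\mathcal{Q}_\ell$-measurable for each $\ell$: if $\theta\in Q$ then the identification cell $Q_\ell\in\mathcal{Q}_\ell$ containing $\theta$ satisfies $Q_\ell\subset Q$. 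Requirement (2), $\mathbbm{P}^\theta(F_t)>q$, is immediate from Lemma \ref{fr}(2). Requirement (1), $F_t\subset B^q_t(Q)$, follows from Lemma \ref{fr}(4): on $I_{t\varepsilon}(\theta)$ each agent $\ell$ attaches at least $1-e^{-tb}/p^\theta$ to her cell $Q_\ell\subset Q$, hence to $Q$; since there are finitely many states with $p^\theta>0$, choosing $T$ with $1-e^{-Tb}/\min_\theta p^\theta\ge q$ makes this uniform for $t\ge T$.

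The main obstacle is requirement (3), that $I_{t\varepsilon}(Q)$ is $q$-evident, i.e. $I_{t\varepsilon}(Q)\subset B^q_t(I_{t\varepsilon}(Q))$. Fix $\omega\in I_{t\varepsilon}(\theta)$ with $\theta\in Q$ and an agent $\ell$. Conditioning on any state $\theta'$ in agent $\ell$'s cell $Q_\ell$ and on her realized history $h_{\ell t}$, the other agents' signals are independent across time, and agent $\ell'$'s period-$s$ signal is drawn from the conditional law $\pi^{\theta'}(\,\cdot\mid x_{\ell s})$; hence the conditional mean of $\hat\phi_{\ell' t}$ is $\hat\phi_{\ell t}M^{\theta'}_{\ell\ell'}$, where $M^{\theta'}_{\ell\ell'}$ is the stochastic matrix with entries $\pi^{\theta'}(x_{\ell'}\mid x_\ell)$. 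Because $\phi^{\theta'}_{\ell'}=\phi^{\theta'}_\ell M^{\theta'}_{\ell\ell'}$ and, since $\theta,\theta'\in Q_\ell$, $\Vert\hat\phi_{\ell t}-\phi^{\theta'}_\ell\Vert_{TV}=\Vert\hat\phi_{\ell t}-\phi^{\theta}_\ell\Vert_{TV}\le\varepsilon$, the contraction property of the Dobrushin coefficient gives
\[
\Vert\hat\phi_{\ell t}M^{\theta'}_{\ell\ell'}-\phi^{\theta'}_{\ell'}\Vert_{TV}\le\delta(M^{\theta'}_{\ell\ell'})\,\varepsilon\le\bar\delta\,\varepsilon,
\]
where $\bar\delta:=\max_{\theta',\ell,\ell'}\delta(M^{\theta'}_{\ell\ell'})<1$; strict positivity of all these matrices, guaranteed by Assumption \ref{fss}, is exactly what forces $\bar\delta<1$. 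This leaves a fixed slack $(1-\bar\delta)\varepsilon>0$, so by Hoeffding's inequality for sums of independent bounded terms the conditional probability that $\hat\phi_{\ell' t}$ falls outside the $\varepsilon$-ball around $\phi^{\theta'}_{\ell'}$ is at most $e^{-ct}$ for some $c>0$, uniformly over $\theta'\in Q_\ell$ and such histories. Since agent $\ell$'s own coordinate already lies in its $\varepsilon$-ball, conditional on $\theta'$ the event $I_{t\varepsilon}(\theta')\subset I_{t\varepsilon}(Q)$ occurs with probability at least $1-(L-1)e^{-ct}$. Averaging over agent $\ell$'s posterior and using Lemma \ref{fr}(3) for the weight on $Q_\ell$ yields
\[
\mathbbm{P}(I_{t\varepsilon}(Q)\mid h_{\ell t})\ge\Big(1-\tfrac{e^{-tb}}{p^\theta}\Big)\big(1-(L-1)e^{-ct}\big)\longrightarrow1,
\]
so for $t$ large every agent $q$-believes $I_{t\varepsilon}(Q)$ there, giving $q$-evidence. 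This is where the construction departs from CEMS: the bound must hold simultaneously for every $\theta'\in Q_\ell$, so the $q$-evident set cannot be tailored to the fine detail of any single $\pi^{\theta'}$.

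For necessity I argue the contrapositive: if $Q$ is not $\mathcal{Q}_\ell$-measurable for some $\ell$, it is not commonly learned. Pick $\theta\in Q$ whose identification cell $Q_\ell$ is not contained in $Q$, so $Q_\ell$ also contains some $\theta'\notin Q$. Under $\theta$, individual learning (equation \ref{ll}) drives agent $\ell$'s posterior mass onto $Q_\ell$, while within $Q_\ell$ her relative posterior never leaves the prior, because all states in $Q_\ell$ induce the same marginal $\phi_\ell$ and hence identical likelihoods along her history. Consequently $\mathbbm{P}(Q\mid h_{\ell t})\to\rho:=\big(\sum_{\theta''\in Q\cap Q_\ell}p^{\theta''}\big)/\big(\sum_{\theta''\in Q_\ell}p^{\theta''}\big)$ almost surely under $\theta$, with $\rho\in(0,1)$ since $Q_\ell$ meets both $Q$ and its complement. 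Fixing any $q\in(\rho,1)$ gives $\mathbbm{P}^\theta(B^q_{\ell t}(Q))\to0$. But common learning of $Q$ would supply events $F_t\subset B^q_t(Q)\subset B^q_{\ell t}(Q)$ with $\mathbbm{P}^\theta(F_t)>q$, forcing $\mathbbm{P}^\theta(B^q_{\ell t}(Q))>q$ for large $t$, a contradiction. Hence a commonly learned $Q$ is $\mathcal{Q}_\ell$-measurable for every $\ell$, i.e. a union of cells of $\mathcal{QI}$, whose atoms are precisely the cells of $\mathcal{QI}$; together with sufficiency this gives the characterization. The genuinely delicate step is the $q$-evidence estimate of the middle paragraph, where the strict Dobrushin contraction from Assumption \ref{fss} is indispensable; everything else reduces to the law of large numbers and the cited lemmas.
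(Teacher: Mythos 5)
Your proposal is correct and follows essentially the same route as the paper: the same witness events $I_{t\varepsilon}(Q)$, the same use of Lemma \ref{fr} for requirements (1) and (2), the same $q$-evidence argument via the stochastic matrices $M^{\theta}_{\ell\ell'}$, the Dobrushin contraction forced by Assumption \ref{fss}, and averaging over the posterior within the cell $Q_\ell$, and the same necessity argument from frozen posterior ratios inside an identification cell. The only differences are cosmetic: you reprove the conditional concentration step (the paper's citation of CEMS Lemma 3) directly via Hoeffding with explicit exponential rates instead of the paper's $q^\beta$ bookkeeping, and your necessity conclusion (commonly learned implies $\mathcal{QI}$-measurable) is if anything stated slightly more carefully than the paper's.
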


\begin{proof}
We begin by proving sufficiency.
Let $Q^*\in \mathcal{QI}$ and $q\in(0,1)$ be given. 
Choose $\varepsilon\in(0,\bar\varepsilon)$ as in Lemma \ref{fr} and $\beta\in(0,1)$ sufficiently small so that $q^\beta(1-L(1-q^\beta))>q$.
We will show that the sequence of events $(I_{t\varepsilon}(Q^*))_{t=0}^\infty$ has the properties required of the sequence $(F_t)_{t=0}^\infty$ in Definition \ref{defcl}.
Hence we will show that $Q^*$ is commonly learned.

By Part 2 of Lemma \ref{fr}, $\mathbbm{P}^\theta(I_{t\varepsilon}(Q^*))\rightarrow 1$ for any $\theta\in Q^*$.
Choose $T'$ so that $\mathbbm{P}^\theta(I_{t\varepsilon}(Q^*))>q^\beta$ for all $t\geq T'$ and all $\theta\in Q^*$.
We, then, will choose
$$
T\geq T'\wedge \max_{\theta}\frac{\log(1-q^\beta)p_\theta}{-b} .
$$

\emph{(1) To show  $I_{t\varepsilon}(Q^*)\subset B^q_t(Q^*)$ for all $t\geq T$:}
Note that the above lower bound on $T$ ensures that  $q^\beta\leq 1-e^{-tb}/p^\theta$ for all $t\geq T$ and  $\theta$.
Pick $\theta\in \bigcap_{\ell=1}^L Q_\ell \subset Q^*$, then by Part 3 of 
Lemma \ref{fr}, $I_{t\varepsilon}(\theta)\subset B_{\ell t}^{q^\beta}(Q_\ell)$ for all $\ell\in[L]$ and $t\geq T$.
As $Q_\ell\subset Q^*$ this implies $I_{t\varepsilon}(\theta)\subset B_{\ell t}^{q^\beta}(Q^*)$ for all $\ell$ and all $t\geq T$.
Therefore,  $I_{t\varepsilon}(\theta)\subset B_{t}^{q^\beta}(Q^*)$  for all $t\geq T$.
This inclusion holds for all $\theta\in Q^*$, so we can take a union over the events on the left to get 
$I_{t\varepsilon}(Q^*)\subset B_{t}^{q^\beta}(Q^*)$
for all $t\geq T$.
The claim $I_{t\varepsilon}(Q^*)\subset B^q_t(Q^*)$ follows as $\beta<1$.

\emph{(2) To show that for all $\theta\in Q^*$ and all $t\geq T$, $\mathbbm{P}^\theta(I_{t\varepsilon}(Q^*))>q$:} This follows from the choice of $T\geq T'$ and $\beta<1$.

(3) \emph{$I_{t\varepsilon}(Q^*)$ is $q$-evident for all $t>T$:}
Pick $\omega\in I_{t\varepsilon}(Q^*)$, then by definition there must be some $\theta\in \bigcap_{\ell=1}^L Q^*_\ell \subset Q^*$ such that  $\omega\in I_{t\varepsilon}(\theta^*)$. Hence, at $\omega$ the  empirical measure of each agent's signals is close to their marginals under $\theta^*$: $\Vert\hat\phi_{\ell t}-\phi^{\theta^*}_\ell\Vert_{TV}<\varepsilon$ for all $\ell$.

Now we define a family of matrices for each state $\theta\in\Theta$.
The matrix $M^\theta_{\ell \ell'}$ is $|X_\ell|\times |X_{\ell'}|$ and has rows that describe agent $\ell$'s conditional beliefs about agent $\ell'$'s signals in the state $\theta$.

The entries in the matrix are $\pi^\theta(x_{\ell},x_{\ell'})$, the marginal distribution on agents $\ell$ and $\ell'$'s signals, normalized by 
$\phi_\ell^\theta(x_\ell)$
the marginal on agent $\ell$\/'s signals. Therefore, these non-square matrices have rows that sum to unity like Markov matrices.
$$
M^\theta_{\ell \ell'}:=\left(\frac{\pi^\theta(x_{\ell},x_{\ell'})}{\phi_\ell^\theta(x_\ell)}\right)_{x_\ell\in X_\ell, x_{\ell'}\in X_{\ell'}},
\makebox{where}
\ \ \ 
\pi^\theta(x_{\ell},x_{\ell'}):=\sum_{x_{-\ell \ell'}\in X_{-\ell \ell'}} \pi^\theta(x_\ell,x_{\ell'},x_{-\ell \ell'}).
$$
(Recall that will use $x_{-\ell \ell'}$ to denote the signal profile $x\in X$ with the $\ell^{\rm th}$ and $\ell'^{\rm th}$ entries omitted.)
If we treat $\phi^\theta_\ell$ and $\phi^\theta_{\ell'}$ as column vectors then elementary arithmetic shows that
\begin{equation}
(\phi^\theta_\ell)^TM^\theta_{\ell \ell'}=(\phi^\theta_{\ell'})^T,
\qquad
\forall \ell\not=k.
\label{marg}
\end{equation}

We now describe the \emph{Doeblin contraction coefficient} or \emph{Dobrushin's ergodic coefficient} of the matrix $M^\theta_{\ell \ell'}$.
Let $m^\theta_{\ell \ell'}(x_\ell)\in\Delta(X_{\ell'})$ be the $x_\ell^{\rm th}$ row of  $M^\theta_{\ell \ell'}$.
By Assumption \ref{fss}, these conditional probability distributions have full support, therefore, for all $\theta$ and all $\ell\not=\ell'$
\begin{align*}
\Vert m^\theta_{\ell \ell'}(x_\ell)-m^\theta_{\ell \ell'}(x'_\ell)\Vert_{TV}&<1,\qquad \forall x_\ell,x'_\ell\in X_\ell.
\end{align*}
(The variation distance will equal one only if the measures have disjoint support.)
We will define $\lambda^\theta_{\ell \ell'}<1$ to be the maximum variation distance between the rows of $M^\theta_{\ell \ell'}$, that is,
$$
\lambda^\theta_{\ell \ell'}:=\max_{x_\ell,x'_\ell \in X_\ell}\Vert m^\theta_{\ell \ell'}(x_\ell)-m^\theta_{\ell \ell'}(x'_\ell)\Vert_{TV}.
$$ 
This maximum is also strictly less than one by finiteness.
The value $\lambda^\theta_{\ell \ell'}$ is known as Doeblin contraction coefficient or Dobrushin's Ergodic Coefficient of the matrix $M^\theta_{\ell \ell'}$.
Let $\lambda<1$ be the largest value of $\lambda^\theta_{\ell \ell'}$, that is $\lambda:=\max_{\ell,\ell',\theta}\lambda^\theta_{\ell \ell'}$.
By Theorem 7.3 \cite{Bremaud99} p.237 we have that for any $\ell\not=\ell'$, any $\phi_\ell,\phi'_\ell\in\Delta(X_\ell)$ and any $\theta$
\begin{equation}
\Vert (\phi_\ell)^TM^\theta_{\ell \ell'} - (\phi'_\ell)^TM^\theta_{\ell \ell'} \Vert_{TV}\leq \lambda^\theta_{\ell \ell'} \Vert \phi_\ell - \phi'_\ell\Vert_{TV}\leq \lambda \Vert \phi_\ell - \phi'_\ell\Vert_{TV}.
\label{dob}
\end{equation}
Thus the matrices $M^\theta_{\ell \ell'}$ describe linear operators from $\Delta(X_\ell)$ to $\Delta(X_{\ell'})$ that are contractions in the variation norm and $\lambda$ is an upper bound this contraction.

By CEMS Lemma 3 (p.923), $(\hat\phi_{\ell t})^TM^\theta_{\ell \ell'}$ is agent $\ell$\/'s best prediction of agent $\ell'$'s empirical measure.
To be precise, it is says that for any $q<1$ there exists a $T_3$ such that for all $\theta$ all $\ell'\not=\ell$ and all $t\geq T_3$ 
\begin{align}
\mathbbm{P}^\theta\left( \left\Vert (\hat\phi_{\ell t})^TM^\theta_{\ell \ell'} -(\hat\phi_{\ell' t})^T\right\Vert_{TV}\leq (1-\lambda)\varepsilon \mid h_{\ell t}\right)>q^\beta,
\label{cems}
\end{align}
(Note: In CEMS the norm is the twice the total variation norm that is used here.)

Now we will combine the conditions (\ref{dob}) and (\ref{cems}) to show that when 
$\omega\in I_{t\varepsilon}(\theta^*)$ agent $\ell$ attaches high probability to the event  $I_{t\varepsilon}(Q^*_\ell)$ having occurred.  
Consider agent $\ell$ at $\omega\in I_{t\varepsilon}(\theta^*)\subset I_{t\varepsilon}(Q^*)$.
Conditional any $\theta$ we derive an upper bound on how far agent $\ell'$\/'s empirical measure is from its true marginal in terms on agent $\ell$\/'s marginal.
First notice we have the following upper bound on this:
\begin{align*}
\Vert \hat \phi_{\ell' t}-\phi^\theta_{\ell'}\Vert_{TV}
&=\Vert (\hat \phi_{\ell' t})^T-(\phi^\theta_{\ell})^TM^\theta_{\ell \ell'}\Vert_{TV}
\\
&\leq
\Vert (\hat \phi_{\ell' t})^T-(\hat\phi_{\ell t})^TM^\theta_{\ell \ell'}\Vert_{TV}
+
\Vert (\hat\phi_{\ell t})^TM^\theta_{\ell \ell'}-(\phi^\theta_{\ell})^TM^\theta_{\ell \ell'}\Vert_{TV}
\\
&\leq
\Vert (\hat \phi_{\ell' t})^T-(\hat\phi_{\ell t})^TM^\theta_{\ell \ell'}\Vert_{TV}
+
\lambda \Vert \hat\phi_{\ell t}-\phi^\theta_{\ell}\Vert_{TV}
\end{align*}
(The first equality substitutes from (\ref{marg}),
then there is an application of the triangle inequality and 
the final upper bound uses (\ref{dob}).)

Suppose that $\omega\in I_{t\varepsilon}(\theta^*)$.
As $\phi^{\theta^*}_{\ell}=\phi^\theta_{\ell}$ for all
$\theta\in Q^*_\ell$, we know that
 $\Vert\hat\phi_{\ell t}-\phi^{\theta}_\ell\Vert_{TV}\leq\varepsilon$ for all $\theta\in Q^*_\ell$.
 Thus  if $\omega\in I_{t\varepsilon}(\theta^*)$
$$
\Vert \hat \phi_{\ell't}-\phi^\theta_{\ell'}\Vert_{TV}
\leq
\Vert (\hat \phi_{\ell' t})^T-(\hat\phi_{\ell t})^TM^\theta_{\ell \ell'}\Vert_{TV}
+
\lambda \varepsilon,
\qquad 
\forall\theta\in Q^*_\ell .
$$
By (\ref{cems}) the first term on the right is less than $\varepsilon(1-\lambda)$ with at least probability $q^\beta$ in any state $\theta$.
Thus the right is less than $\varepsilon$  with at least probability $q^\beta$ in all states $\theta\in Q^*_\ell$.
That is, for all $t>\max\{T_3,T\}$, all $\omega\in I_{t\varepsilon}(\theta^*)$, all $\ell'\not=\ell$ and all $\theta\in Q^*_\ell$
\begin{equation}
\mathbbm{P}^\theta\left(\left\Vert \hat \phi_{\ell' t}-\phi^\theta_{\ell'}\right\Vert_{TV}
\leq \varepsilon\mid h_{\ell t} \right) >q^\beta
\label{df}
\end{equation}
 If agent $\ell$ observes a history consistent with $\omega\in I_{t\varepsilon}(\theta^*)$,
 the event $ I_{t\varepsilon}(\theta^*)$ occurs if $\Vert \hat \phi_{\ell' t}-\phi^{\theta^*}_{\ell'}\Vert_{TV}
\leq \varepsilon$ for all agents $\ell'\not=\ell$, because agent $\ell$ knows their signals satisfy this condition.
From the lower bound (\ref{df}), the probability that all $L-1$ of these events occurs is at least $1-L(1-q^\beta)$.
Hence we have that if $\omega\in I_{t\varepsilon}(\theta^*)$ for all $\theta\in Q^*_\ell$
\begin{equation}
\mathbbm{P}^\theta( I_{t\varepsilon}(\theta) \mid h_{\ell t} )
=
\mathbbm{P}^\theta(\Vert \hat \phi_{\ell' t}-\phi^\theta_{\ell'}\Vert_{TV}
\leq \varepsilon, \forall \ell'\not=\ell \mid h_{\ell t} ) >1-L(1-q^\beta)
\label{fin}
\end{equation}

Finally we can establish the $q$-evidence.
Recall that $Q^*\in\mathcal{QI}$ and that $Q^*_\ell\subset Q^*$.
We begin by doing the following calculation which uses the fact that conditional on $\theta\in Q^*_\ell$ agent $\ell$\/'s beliefs are equal to their priors.
\begin{align*}
\mathbbm{P}(I_{t\varepsilon}(Q^*)\mid h_{\ell t})
&\geq
\mathbbm{P}(I_{t\varepsilon}(Q^*_\ell)\mid h_{\ell t})
\\
&\geq
\mathbbm{P} (Q^*_\ell\mid h_{\ell t})\mathbbm{P}(I_{t\varepsilon}(Q^*_\ell)\mid h_{\ell t},\theta\in Q^*_\ell)
\\
&=
\mathbbm{P} (Q^*_\ell\mid h_{\ell t}) \frac{\sum_{\theta\in Q^*_\ell} p_\theta \mathbbm{P}^\theta(I_{t\varepsilon}(Q^*_\ell)\mid h_{\ell t})}{\sum_{\theta'\in Q^*_\ell} p_{\theta'}}
\\
&\geq
\mathbbm{P} (Q^*_\ell\mid h_{\ell t}) 
\frac{\sum_{\theta\in Q^*_\ell} p_\theta \mathbbm{P}^\theta(I_{t\varepsilon}(\theta)\mid h_{\ell t})}{\sum_{\theta'\in Q^*_\ell} p_{\theta'}}
\end{align*}

By (\ref{fin}) if $\omega\in I_{t\varepsilon}(\theta^*)$ then we have a lower bound on all of the terms in the final summation above. Hence for all $t>\max\{T_3,T\}$ and all $\omega\in I_{t\varepsilon}(\theta^*)$
we have that 
$$
\mathbbm{P}(I_{t\varepsilon}(Q^*)\mid h_{\ell t})\geq \mathbbm{P} (Q^*_\ell\mid h_{\ell t}) (1-L(1-q^\beta)).
$$
By our choice of $T$ we also have that  $\mathbbm{P} (Q^*_\ell\mid h_{\ell t}) \geq q^\beta$.
Thus we have that for $t>\max\{T_3,T\}$ and all $\omega\in I_{t\varepsilon}(\theta^*)$
$$
\mathbbm{P}(I_{t\varepsilon}(Q^*)\mid h_{\ell t})\geq q^\beta (1-L(1-q^\beta))>q.
$$
(The final inequality follows from our choice of $\beta$.)
Using the language of beliefs this shows that $I_{t\varepsilon}(\theta^*)\subset B^q_\ell(I_{t\varepsilon}(Q^*))$.
As this holds for all agents $\ell$ and for any $\theta^*\in Q^*$ this implies $I_{t\varepsilon}(Q^*) \subset B^q(I_{t\varepsilon}(Q^*))$ and so 
$I_{t\varepsilon}(Q^*)$ is $q$-evident for all $t>\max\{T,T_3\}$ and we have completed the proof of Step 3.

Finally, we need to show that $Q\in \mathcal{QI}$ is necessary for common learning.
Consider a set $\hat Q\not\in\mathcal{QI}$ we will show that this  cannot be commonly learned.
As $\hat Q\not\in\mathcal{QI}$, the set $\hat Q$ is not generated by the information partition of some agent $\ell$.
Hence there exists $\tilde\theta,\hat\theta\in Q_\ell\in\mathcal{Q}_\ell$ 
so that $\hat\theta\in \hat Q$ but $\tilde\theta\not\in\hat Q$.
By (\ref{ll}) we know that for any $r<1$
$$
\lim_{t\rightarrow\infty} \mathbbm{P}^{\hat\theta}(B^r_{\ell t}(Q_\ell))=1.
$$
Let $p^{\tilde\theta}$ be the prior probability of state $\tilde\theta$.
As the relative probabilities that agent $\ell$ attaches to states in $Q_\ell$ do not change as they observe their signals,
 agent $\ell$ must attach at least probability $rp^{\tilde\theta}$ to state $\tilde\theta$
if she attaches probability $r$ to $Q_\ell$, hence for any $r<1$
$$
\lim_{t\rightarrow\infty} \mathbbm{P}^{\hat\theta}(B^{rp^{\tilde\theta}}_{\ell t}(\{\tilde\theta\}))=1.
$$
Choose $\bar q\in(1-rp^{\tilde\theta},1)$.
If you attach probability $rp^{\tilde\theta}$ to an event it is impossible to attach probability $\bar q$ to its complement.
Therefore, as $\tilde\theta\not\in\hat Q$ the above implies
\begin{equation}
\lim_{t\rightarrow\infty} \mathbbm{P}^{\hat\theta}(B^{q}_{\ell t}(\hat Q))=0,
\qquad
\forall q>\bar q.
\label{sd}
\end{equation}
From the definitions of common learning (\ref{cl}) and (\ref{fg}), $\hat Q$ is commonly learned if  for all $q\in(0,1)$ and for all
$\theta\in\hat Q$
$$
\lim_{t\rightarrow\infty}\mathbbm{P}^\theta\left(  C^q_{t}\left(\hat Q\right)\right)=1
\qquad 
\makebox{where} \qquad C^q_{t}(\hat Q):=\bigcap_{n\geq1} [B^q_{t}]^n(\hat Q).
$$
But 
$C^q_{t}\left(\hat Q\right)\subset B^q_{t}(\hat Q)\subset B^q_{\ell t}(\hat Q)$ and $\hat\theta\in\hat Q$
so
$$
\lim_{t\rightarrow\infty}\mathbbm{P}^{\hat\theta}\left(  C^q_{t}\left(\hat Q\right)\right)\leq 
\limsup_{t\rightarrow\infty}\mathbbm{P}^{\hat\theta}\left(  B^q_{\ell t}(\hat Q) \right)
$$
However, by (\ref{sd}) the right hand side is bounded by zero if $q>\bar q$.
This means common learning is does not hold.
\end{proof}

\section{Conclusion}
We have generalised the results of CEMS to the case where not all agents learn all the states.
This generalisation is not without cost, however, as it was necessary to assume that signals have full support.

\renewcommand{\baselinestretch}{1}
\normalsize

\bibliographystyle{econometrica}

\bibliography{mwccl}
\addcontentsline{toc}{section}{~~~~References}


\section*{Appendix}

\subsection*{Proof of Lemma \ref{fr}}

Let $\theta\in \bigcap_{\ell=1}^L Q_\ell$ be given, where every $Q_\ell\in\mathcal{Q}_\ell$, and let 
$$
H( \phi_\ell \Vert \phi_\ell'):=\sum_{x_\ell}\phi_\ell(x_\ell)\log\frac{\phi_\ell(x_\ell)}{\phi'_\ell(x_\ell)},
\qquad
\phi_\ell,\phi'_\ell\in\Delta(X_\ell);
$$ 
denote the Kullback-Leibler divergence (assuming $\phi'_\ell(x_\ell)>0$).
Also, $N_\varepsilon(\phi_\ell):=\{\phi'_\ell\in\Delta(X_\ell):\Vert\phi_\ell-\phi'_\ell\Vert_{TV}\leq \varepsilon\}$ will be used to denote an $\varepsilon$-neighbourhood of $\phi_\ell$ in the total variation norm.
If $\theta\in \bigcap_{\ell=1}^L Q_\ell$, then in states $\theta\not\in Q_\ell$ agent $\ell$ observes signals sampled from distinct  distributions from those generated by signals in $Q_\ell$.
So, if $\phi_\ell=\phi_\ell^\theta$ then $H(\phi_\ell \Vert \phi_\ell^\theta)-H(\phi_\ell \Vert \phi_\ell^{\theta'})<0$  for all $\theta'\not\in Q_\ell$.
As $H$ is a continuous function of $\phi_\ell$, the finite set of strict inequalities 
$$
H(\phi_\ell \Vert \phi_\ell^\theta)-H(\phi_\ell \Vert \phi_\ell^{\theta'})<0, 
\qquad
\theta'\not\in Q_\ell;
$$ 
continues to hold for $\phi_\ell$ close to $\phi_\ell^\theta$.
As there are a finite number of states and agents we can choose a neighborhood size so these inequalities hold for all states and agents.
That is,
there exists an $\bar\varepsilon>0$ such that for all $\theta\in\Theta$, all $\ell$, and all $\varepsilon\in(0,\bar\varepsilon)$
$$
H(\phi_\ell\Vert \phi_\ell^\theta)-H(\phi_\ell\Vert\phi_\ell^{\theta'})< 0,
\qquad \forall \phi_\ell\in N_{\varepsilon}(\phi_\ell^\theta), \theta\in Q_\ell, \theta'\not\in Q_\ell.
$$
Given this finite collection of strict inequalities there exists $b>0$, dependent on $\varepsilon$, so that for all $\theta$ and all $\ell$
\begin{equation}
H(\phi_\ell\Vert \phi_\ell^\theta)-H(\phi_\ell\Vert\phi_\ell^{\theta'})<-b,
\qquad \forall \phi_\ell\in N_{\varepsilon}(\phi_\ell^\theta), \theta\in Q_\ell, \theta'\not\in Q_\ell.
\label{chen}
\end{equation}

\paragraph{Part (1):}
Sanov's Theorem (see for example \cite{CoverThomas91} p.292) gives bounds on the probability that $\hat\phi_{\ell t}\in N_{\varepsilon}(\phi_\ell^\theta)$ conditional on $\theta$ being the true state.
\begin{equation}
1-(t+1)^{-|X_\ell|} e^{-t \alpha_{\ell\theta}(\varepsilon)}
\geq 
\mathbbm{P}^\theta\left( \hat\phi_{\ell t}\in N_{\varepsilon}(\phi_\ell^\theta)\right)\geq 1-(t+1)^{|X_\ell |} e^{-t \alpha_{\ell\theta}(\varepsilon)}.
\label{san}
\end{equation}
where $\alpha_{\ell\theta}(\varepsilon):= \inf\{ H( \phi_\ell\Vert\phi_\ell^\theta) :  \phi_\ell \in \Delta(X_\ell)\setminus N_{\varepsilon}(\phi_\ell^\theta) \}$.
The event $I_{t\varepsilon}(\theta)$ is the intersection of the events $\hat\phi_{\ell t}\in N_{\varepsilon}(\phi_\ell^\theta)$ for $\ell\in[L]$.
Thus 
$$
\mathbbm{P}^\theta(I_{t\varepsilon}(\theta))\geq1-\sum_{\ell\in[L]} (t+1)^{|X_\ell |} e^{-t \alpha_{\ell\theta}(\varepsilon)},
$$
and the claim follows.

\paragraph{Part (2):}
This is true as $I_{t\varepsilon}(\theta)\subset I_{t\varepsilon}(Q)$ and part (1) holds.

\paragraph{Part (3):}
Let $\hat\phi_{\ell t}$ be the empirical probabilities of agent $\ell$\/'s signals at time $t$. Then,
\begin{align*}
\log \mathbbm{P}(\theta' |h_{\ell t}) 
&\leq 
\log \frac{\mathbbm{P}(\theta' |h_{\ell t})}{ \mathbbm{P}(\theta |h_{\ell t})}= \log \frac{\mathbbm{P}(\theta' \cap h_{\ell t})}{ \mathbbm{P}(\theta \cap h_{\ell t})}
\\
&=\log\frac{p^{\theta'}}{p^\theta}+t\sum_{x_\ell\in X_\ell }\hat\phi_{\ell t}(x_\ell )\log\frac{\phi_\ell^{\theta'}(x_\ell)}{\phi_\ell^{\theta}(x_\ell)}
\\
&=\log\frac{p^{\theta'}}{p^\theta}+t\left[H(\hat\phi_{\ell t}\Vert \phi_\ell^{\theta})-H(\hat\phi_{\ell t}\Vert \phi_\ell^{\theta'})\right]
\end{align*}
(Here the second equality follows from a substitution for the probabilities of the history $h_{\ell t}$.)
For any $\varepsilon<\bar\varepsilon$ and $\hat\phi_{\ell t}\in N_{\varepsilon}(\phi_\ell^\theta)$ and $\theta'\not\in Q_\ell$ there is an upper bound of $-b$ on the term in square parentheses in (\ref{chen}). 
Hence, for all $\theta'\not\in Q_\ell$ and all $\hat\phi_{\ell t}\in N_{\varepsilon}(\phi_\ell^\theta)$,
$$
\mathbbm{P}(\theta' |h_{\ell t}) 
\leq
\frac{p^{\theta'}}{p^\theta}e^{-tb}.
$$
A summation then implies that for $\hat\phi_{\ell t}\in N_{\varepsilon}(\phi_\ell^\theta)$ 
\begin{equation}
\mathbbm{P}(Q_\ell |h_{\ell t}) = 
1-\sum_{\theta'\not\in Q_\ell}\mathbbm{P}(\theta' |h_{\ell t}) 
\geq
1-
\frac{e^{-tb}}{p^\theta}.
\label{gb}
\end{equation}

\subsection*{Proof of  (\ref{ll})}

Conditional on $\theta\in \bigcap_{\ell=1}^L Q_\ell$ the random variables $\mathbbm{P}(Q_\ell|h_{\ell t})$ are  bounded submartingales
with respect to the filtrations $(\mathcal{H}_{\ell t})$.
By Doob's Theorem they converge almost surely.
It remains to show that this limit is one.
This can be established by combining (\ref{gb}) and (\ref{san}). That is, $\mathbbm{P}(Q_\ell|h_{\ell t})
\geq
1-
\frac{1}{p^\theta}e^{-tb}$ with at least probability $1-(t+1)^{|X_\ell|} e^{-t \alpha_\theta(\varepsilon)}$ conditional on $\theta$.

\end{document}